\newcommand{\eqref}[1]{(\ref{#1})}
\newtheorem{theorem}{Theorem}
\newtheorem{lemma}{Lemma}
\newtheorem{proposition}{Proposition}
\newtheorem{corollary}{Corollary}
\begin{document}
\begin{frontmatter}

\title{Inference using noisy degrees: Differentially private $\beta$-model and synthetic graphs}
\runtitle{Private synthetic graphs}

\begin{aug}
\author[A]{\fnms{Vishesh} \snm{Karwa}\corref{}\thanksref{T1,T2}\ead[label=e1]{vishesh@cmu.edu}}
\and
\author[B]{\fnms{Aleksandra} \snm{Slavkovi\'{c}}\thanksref{T1}\ead[label=e2]{sesa@psu.edu}}
\runauthor{V. Karwa and A. Slavkovi\'{c}}
\thankstext{T1}{Supported in part by NSF Grant BCS-0941553 to the
Department of Statistics,
Pennsylvania State University.}
\thankstext{T2}{Supported in part by the Singapore National Research
Foundation under
its International Research Centre \@ Singapore Funding Initiative and
administered by the
IDM Programme Office through a grant for the joint Carnegie
Mellon/Singapore Management
University Living Analytics Research Centre.}
\affiliation{Carnegie Mellon University and Pennsylvania State University}
\address[A]{Department of Statistics\\
Carnegie Mellon University\\
132G Baker Hall\\
Pittsburgh, Pennsylvania 15213\\
USA\\
\printead{e1}}

\address[B]{Department of Statistics\\
Pennsylvania State University\\
421A Thomas Bldg.\\
University Park, Pennsylvania 16802\\
USA\\
\printead{e2}}
\end{aug}

%
\received{\smonth{8} \syear{2014}}
%
\revised{\smonth{6} \syear{2015}}

%
\begin{abstract}
The $\beta$-model of random graphs is an exponential family model with the
degree sequence as a sufficient statistic. In this paper, we contribute
three key results. First, we characterize conditions that lead to a
quadratic time algorithm to check for the existence of MLE of the
$\beta
$-model, and show that the MLE never exists for the degree partition
$\beta$-model. Second, motivated by privacy problems
with network data, we derive a differentially private estimator of the
parameters of $\beta$-model, and show it is consistent and
asymptotically normally distributed---it achieves the same rate of
convergence as the nonprivate estimator.
We present an efficient algorithm for the private estimator that can be
used to release synthetic graphs. Our techniques can also be
used to release degree distributions and degree partitions accurately
and privately, and to
perform inference from noisy degrees arising from contexts other than
privacy. We
evaluate the proposed estimator on real graphs and compare it with a current
algorithm for releasing degree distributions and find that it does
significantly better.
Finally, our paper addresses shortcomings of current approaches to a
fundamental problem of how to perform valid statistical inference from
data released by privacy mechanisms, and lays a foundational groundwork
on how to achieve optimal and private statistical inference in a
principled manner by modeling the privacy mechanism; these principles
should be applicable to a class of models beyond the $\beta$-model.
\end{abstract}

%
\begin{keyword}[class=AMS]
\kwd[Primary ]{62F12}
\kwd{91D30}
\kwd[; secondary ]{62F30}
\end{keyword}
\begin{keyword}
\kwd{Degree sequence}
\kwd{differential privacy}
\kwd{$\beta$-model}
\kwd{existence of MLE}
\kwd{measurement error}
\end{keyword}
%
\end{frontmatter}

\section{Introduction and motivation}
Random graph models whose sufficient statistics are degree sequences, $d$,
such as the $p_1$ model for directed graphs or its special case, the
$\beta$-model for undirected graphs [\citet
{hollandp1,chatterjee2011random,olhede2012degree,rinaldo2011maximum}]
are commonly used
in modeling of real world networks. Although there is evidence that $d$ alone
does not capture all the structural information in a graph [e.g.,
\citet
{snijders2003accounting}], in many cases it is the only information
available and every other structural property of a graph is estimated from
random graph models based on $d$. In more general cases, random graph models
based on $d$ serve as a natural starting point for modeling networks;
they may also serve as null models for hypothesis testing
[\citet{perry2012null, zhangandchen}]. However, the degrees may carry
confidential and sensitive information, and thus limit our ability to share
such data more widely for the purpose of statistical inference.
For
example, in epidemiological studies of sexually transmitted disease [e.g.,
see \citet{helleringer2007sexual}], a survey collects information on the
number of sexual partners of an individual, which provides an estimate
of the
degree of each node that is then used for modeling and reconstruction
of a sexual network. The benefits of analyzing such networks are clear [e.g.,
\citet{goodreau2009birds}], but releasing such sensitive information raises
significant privacy concerns [e.g., (\citet{narayanan2009anonymizing}].

Data privacy is a growing problem due to the large amount of data being
collected, stored, analyzed and shared across multiple domains.
Statistical Disclosure Control (SDC) aims at designing data sharing mechanisms
that address the trade-off between minimizing the risk of
disclosing sensitive information and maximizing of data utility; for more
details on SDC methodology,
see, for example, \citet{willenborg96, fienberg10,ramanayake10} and
\citet{hundepool2012sdc}.
More recently, data privacy research has evolved with a focus on
designing mechanisms that satisfy
some \emph{rigorous} notions of privacy but at the same time provide
meaningful utility.

Differential Privacy (DP)
[\citet{DMNS06}] has emerged as a key rigorous definition
of privacy and as a way to inform the design of privacy
mechanisms with pre-specified worst case disclosure risk. However,
existing DP mechanisms are designed with a focus on estimating accurate
summary statistics of the data, as opposed to estimating parameters of
a model that are consistent and have correct confidence intervals; see
\citet{smith2008efficient} and \citet{vu2009differential} for exceptions.
As recently shown by \citet{duchi2013local}, estimating parameters of
models (that correspond to population quantities) and estimating
summary statistics are fundamentally different problems, especially in
the privacy context.
However, the privacy mechanism is typically ignored and
the perturbed statistics are used for subsequent analyses.
Among many potential problems, ignoring the privacy mechanism can lead
to invalid even nonexistent parameter estimates, as initially
demonstrated in \citet{FienbergRinYang}, \citet{karwapsd2012} and in
this paper.

This paper addresses the above mentioned fundamental problem of
performing valid statistical inference using data released by a
differentially private mechanism.
Our work demonstrates that to obtain optimal parameter estimates by
using data shared by privacy preserving mechanisms, new estimation
procedures must be derived for specific classes of inference problems
by modeling the privacy mechanism as a nonlinear measurement error
process. The nonlinearity arises from the fact that noise is usually
added to the sufficient statistics, as opposed to the data [see also
\citet{carroll2006measurement}].
We illustrate the proposed principles in the context of special but
important case of sharing network data using differential privacy;
however, these principles are applicable beyond the specific privacy
mechanism and the models considered here.

For network data, DP comes in two
variants: Edge Differential Privacy [e.g., see \citet{nissim2007smooth}]
and Node Differential Privacy [e.g., see
\citet{kasiviswanathan2013analyzing}], designed to limit disclosure of edge
and node (along with its edges) information, respectively, in a graph $G$.
We focus on the edge differential privacy with a goal of estimating
the parameters of the $\beta$-model of random graphs whose sufficient
statistics are network's degrees $d$. One of the popular ways of
releasing $d$ (and in general any summary statistic) to protect privacy
is to release $z = d+e$, where $e$ is some noise. In some cases, $z$ is
post-processed to reduce error [e.g., see \citet{hay2009} for release of
degree partitions] with the end goal to obtain an approximate estimate
of the summary statistic of the data. However, the end goal of a
statistical inference is not the estimation of statistics, in fact, the
sufficient statistics are the starting point. Without any additional
tools, the analyst is forced to directly use the noisy summary
statistic $z$ for inference.
We present techniques to take into account the noise addition process
and thereby consistently compute the maximum likelihood estimates (MLE)
of the $\beta$-model from a noisy degree sequence.
The following are the more specific contributions of this paper:
\begin{longlist}[1.]
\item[1.] In Theorem~\ref{thm:beta.mle.exist}, we derive necessary and
sufficient conditions for the
existence of MLE of the $\beta$-model, a result applicable beyond the
privacy context. These conditions are computationally more efficient
than those of \citet{rinaldo2011maximum}, which are more general, but
computationally intractable. This result gives insights into the
conditions when the parameter estimates do not exist due to noisy
statistics arising from privacy or possibly from sampling and
censoring [\citet{handcock2010modeling}].

\item[2.] Using the result on existence of MLE, we illustrate that ignoring
the privacy mechanism and directly using the noisy statistic $z$ for
inference may lead to issues such as nonexistence of MLE of the $\beta
$-model. We also illustrate that the customary practice
of simply minimizing the $L_1$ and/or $L_2$ distance between original
and noisy statistics are not sufficient measures to guarantee
statistical utility, and thus a valid inference. In particular, to
obtain optimal and valid parameter estimates, the privacy mechanism
must be explicitly taken into account when estimating the sufficient
statistics from their noisy versions.

\item[3.] By modeling the privacy mechanism as a (known) measurement error
process, we obtain a private maximum likelihood estimate $\hat{d}$
of the degree sequence~$d$, from its noisy counterpart $z$. In
Theorem~\ref{thm:opt} and Algorithm \ref{alg:main}, we show that this
estimation problem can be solved efficiently, using a well-known
characterization of degree sequences due to \citet{havel} and \citet
{hakimi}. This is a nonstandard maximum likelihood estimation problem
where the parameter set is discrete and its dimensionality increases
with the sample size. Using simulation studies, we show that $\hat{d}$
has smaller error and greater statistical utility when compared to
using $z$ directly for parameter estimation.

\item[4.] In Theorems \ref{thm:consistent} and \ref{thm:clt}, we derive a
differentially private consistent and asymptotically normal estimator
$\hat{\beta}_{\varepsilon}$ of the parameters of the $\beta$-model of
random graphs, by using the proposed estimated $\hat{d}$ (instead of
$z$). $\hat{\beta}_{\varepsilon}$ then can be used to generate valid
synthetic graphs. Consistency of the usual MLE of $\beta$, without any
privacy constraints, 
was shown by \citet{chatterjee2011random} and its asymptotic normality
was established in \citet{Yan01062013}. Critically, since the proposed
$\hat{\beta}_{\varepsilon}$ achieves the same rate as the nonprivate
estimator, we show that asymptotically privacy comes at no additional
cost in this setting.
\end{longlist}

The rest of the paper is organized as follows. In Section~\ref
{prelim}, we
introduce the notation and the key results on the existence of MLE of the
$\beta$-model and inference from noisy statistics. In Section~\ref
{edp}, we
describe our privacy model. 
Section~\ref{main} forms the core of the paper where we present our main
results on estimating differentially private parameters of the $\beta$-model and on generating synthetic graphs. In
Section~\ref{compareWithHay}, we extend our algorithm to release degree
partitions and compare it to that of \citet{hay2009}. In Section~\ref
{experiments}, we evaluate our proposed estimators on real graphs.
In Section~\ref{conclusion}, we briefly
discuss avenues for future work, including the challenges in extending
our work to larger class of $\beta$-models. Proofs are presented in
Section~\ref{proofs} and the supplementary material [\citet{karwaslavsupp}].

\section{Statistical inference with degree sequences}
\label{prelim}

Let $G_n$ denote a \textit{simple}, \textit{labeled} undirected graph
on $n$ nodes
and let $m$ be the number of edges in the graph. Let $V$ be the vertex set
and $E$ be the edge set of the graph. A \textit{simple} graph is a graph
with no
self-loops and multiple edges, that is, for any $i \in[n]$, $(i,i)
\notin E$,
and $|\{(i,j): (i,j) \in E\}| = 1$. A \textit{labeled} graph is a graph
with a
fixed ordering on its nodes, that is, there is a fixed mapping from $V$
to $\{1,
\ldots, n\}$. All the graphs considered in this paper are simple and
undirected. Let $\mathcal{G}$ denote the set of all such graphs. The distance
between two graphs $G$ and $G'$ is defined as the number of edges on which
the graphs differ and is denoted by $\delta(G,G')$. $G$ and $G'$ are
said to
be neighbors of each other if the distance between them is at most $1$.
The degree $d_i$ of a node $i$ is the number of nodes connected to it.

%
\begin{definition}[(Degree sequence and degree partition)] 
Consider a labeled graph with label $\{1, \ldots,n\}$. The degree
sequence of a graph $d$ is defined as the sequence of degrees of each
node, that is, $d = \{d_1, \ldots, d_n \}$. The degree sequence ordered
in nonincreasing order is called the degree partition and is denoted
by $\bar{d}$, that is, $\bar{d} = \{d_{(1)}, \ldots, d_{(n)} \}$ where
$d_{(i)}$ is the $i$th largest degree. 
\end{definition}

Given a degree sequence $d$, there can be more than one graph with different
edge-sets $E$, but the same degree sequence~$d$. Each such
graph is called a realization of $d$. Let $\mathcal{G}(d)$ be
the set of simple graphs on $n$ vertices with degree sequence~$d$. Not
every integer sequence of length $n$ is a degree sequence. Sequences
that can
be realized by a simple graph are called \textit{graphical degree
sequences}. Graphical degree sequences have been studied in depth and admit
many characterizations. One of the characterizations called the
Havel--Hakimi criteria, due to \citet{havel} and \citet{hakimi}, is
central to
the proof of Algorithm \ref{alg:main} that estimates a graphical degree
sequence from the noisy sequence $z$; see the proof of Theorem~\ref
{alg:main} in the supplementary material [\citet{karwaslavsupp}] for the
statement of the characterization. We denote the set of all graphical
degree sequences of size $n$ by $\mathit{DS}_n$ and the set of all graphical
degree partitions of size $n$ by $\mathit{DP}_n$.

\subsection{Statistical inference with the \texorpdfstring{$\beta$}{beta}-model}
\label{beta.model}
One of the simplest random graph models involving the degree sequence
is called the $\beta$-model, a term coined by \citet
{chatterjee2011random}. We can describe this model in terms of
independent Bernoulli random variables. Let $\beta= \{\beta_1, \ldots,
\beta_n\}$ be a fixed point in~$\mathbb{R}^n$. For a random graph on
$n$ vertices, let each edge between nodes $i$ and $j$ occur
independently of other edges with probability
\[
p_{ij} = \frac{e^{\beta_i + \beta_j}}{1+ e^{\beta_i + \beta_j}},
\]
where $\{\beta_1, \ldots, \beta_n\}$ is the vector of parameters.

This model admits many different characterizations. For example, it
arises as a special case of $p_1$ models [\citet{hollandp1}] and a
log-linear model [\citet{rinaldo2011maximum}]. It is also a special case
of the discrete exponential family of distributions on the space of
graphs when the degree sequence is a sufficient statistic. Thus, if $G$
is a graph with degree sequence $\{d_1, \ldots, d_n\}$, then the
$\beta
$-model is described by
\[
P(G = g) \propto\exp{\sum_{i=1}^n{d_i
\beta_i}}.
\]
We can also consider a version of the $\beta$-model where the degree
partition $\bar{d}$ is a sufficient statistic. Such a model may be used
if the ordering
of the nodes is irrelevant.

In modeling real world networks, there are two very common inference tasks
associated with the $\beta$-model:
\begin{longlist}[1.]
\item[1.] Sample graphs from $\mathcal{U}(d)$---the uniform distribution
over the set of all graphs with degree sequence $d$.
\item[2.] Estimate parameters of the $\beta$-model using $d$ and generate
synthetic graphs from the $\beta$-model.
\end{longlist}

These tasks are useful, for example, in modeling network when the
degree sequence is the only available information [\citet
{helleringer2007sexual}], and in performing goodness-of-fit testing
of more general network models [Hunter, Goodreau and
Handcock (\citeyear{huntergof})].
A natural question to ask is under what conditions on $d$ and $\bar{d}$
are these two tasks possible: (a) Under what conditions does the MLE of
the $\beta$-model exist? and (b) When is it possible to sample from
$\mathcal{U}(d)$? In the next section, we study the conditions on $d$
and $\bar{d}$ that allow us to perform these inference tasks.

\subsection{Existence of MLE of the \texorpdfstring{$\beta$}{beta}-model}
\label{existenceOfMLE}
Let $\hat{\beta}(d)$ denote the maximum likelihood estimate of $\beta$
obtained using $d$. If we consider the degree partition version of the
$\beta$-model, the MLE is denoted by $\hat{\beta}(\bar{d})$. From the
properties of exponential families, it follows that $\hat{\beta}(d)$
must satisfy the following moment equations:
%
%
\begin{equation}
\label{mle.eq} d_i = \sum_{j \neq i}
\frac{e^{\hat{\beta}_i + \hat{\beta_j}}}{1+
e^{\hat{\beta_i} + \hat{\beta_j}}}.
\end{equation}
A solution to these equations can be obtained in many ways. Most of
them require iterative procedures [\citet
{HunterBradley,chatterjee2011random}]. These procedures do not converge,
or may converge to a meaningless value, when the MLE does not exist.

In Theorem~\ref{thm:beta.mle.exist}, we describe necessary and
sufficient conditions for existence of the MLE of the $\beta$-model.
These conditions lead to an $O(n^2)$ algorithm to check for the
existence of the MLE for the degree sequence $\beta$-model and show
that the MLE never exists for the degree partition $\beta$-model. To
the best of our knowledge, this is the first efficient algorithm for
checking the existence of MLE of the $\beta$-model. The proof of
Theorem~\ref{thm:beta.mle.exist} is in Section~\ref{proof:beta.mle.exist}.

From the theory of exponential families
[\citet{nielsen1978information}], it follows that $\hat{\beta}(d)$ exists
if and only if
$d$ lies in the relative interior of convex hull of
$\mathit{DS}_n$. Although the facets of $\operatorname{Conv}(\mathit{DS}_n)$ are completely
characterized in \citet{mahadev1995threshold}, one cannot use the linear
inequality description of $\operatorname{Conv}(\mathit{DS}_n)$ to check if $d$
lies in the
relative interior. This is because $\operatorname{Conv}(\mathit{DS}_n)$ is a complex
combinatorial object and the number of facet defining inequalities
[given in equation (\ref{eq:desDS})] are at least exponential in $n$.
\citet{rinaldo2011maximum} use results from the existence of MLE of
discrete exponential families [\citet{rinaldo2009geometry}]
to devise an algorithm to check for the existence of MLE in what they
refer to as a \textit{generalized}
$\beta$-model. Their algorithm is based on the so-called ``Cayley
embedding'' which is a reparametrization of the $\beta$-model as a
log-linear model. Although general, their algorithm works only for
graphs up to a few hundreds of nodes, and its computational complexity
is unknown.

The key technique that we use for proving Theorem~\ref
{thm:beta.mle.exist} is to study an ``asymmetric'' part of
$\operatorname{Conv}(\mathit{DS}_n)$. Specifically, we work with $\operatorname
{Conv}(\mathit{DP}_n)$, the convex hull
of degree partitions, instead of $\operatorname{Conv}(\mathit{DS}_n)$. Intuitively,
$\operatorname{Conv}(\mathit{DP}_n)$ can be considered as a ``asymmetrized''
version of
$\operatorname{Conv}(\mathit{DS}_n)$---every permutation equivalent degree
sequence is mapped
to a single degree partition [see also \citet
{bhattacharya2006polytope}]. This asymmerization, remarkably, allows us
to characterize the boundary of $\operatorname{Conv}(\mathit{DS}_n)$, and at
the same time,
greatly reduce the computational complexity. We conjecture that this
technique of asymmerizing a polytope can be extended to other discrete
exponential families to derive efficient algorithms that characterize
their boundary.

%
\begin{theorem}
\label{thm:beta.mle.exist}
Let $G$ be a graph. Let $d$ be its degree sequence and $\bar{d}$ be the
corresponding degree partition obtained by ordering the terms of $d$ in
a nonincreasing order. Consider the following set of inequalities:
%
%
\begin{eqnarray}
\label{mle.ineq}&& \bar{d}_i > 0\quad \mbox{and}\quad
\bar{d}_i < n-1 \qquad\forall i\quad \mbox{and}
\nonumber
\\[-8pt]
\\[-8pt]
\nonumber
&&\sum_{i=1}^k{\bar{d}_i} -
\sum_{i=n-l+1}^{n}{\bar{d}_i} <
k(n-1-l) \qquad\mbox{for } 1 \leq k+l \leq n,
\end{eqnarray}
%
%
\begin{equation}
\label{order.ineq} \bar{d}_{i+1} -
\bar{d}_{i} < 0\qquad \mbox{for } i = 1 \mbox{ to } n.
\end{equation}

The following statements are true:
\begin{longlist}[1.]
\item[1.] The MLE of the degree partition $\beta$-model $\hat{\beta
}(\bar
{d})$ exists iff $\bar{d}$ satisfies the system of inequalities in
(\ref
{mle.ineq}) and (\ref{order.ineq}). In particular, the MLE for the degree
partition $\beta$-model never exists.
\item[2.] If the MLE of the degree sequence $\beta$-model $\hat{\beta}(d)$
exists, then $\bar{d}$ satisfies the system (\ref{mle.ineq}).
\item[3.] If $\bar{d}$ satisfies the system (\ref{mle.ineq}), then $\hat
{\beta
}(d)$ exists for any $d = \pi\bar{d}$ where $\pi$ is a permutation on
$\{1, \ldots, n\}$.
\end{longlist}
\end{theorem}

\begin{remarks*}
\begin{longlist}[1.]
\item[1.] The system of inequalities in equation (\ref{mle.ineq})
are central to the results of Theorem~\ref{thm:beta.mle.exist}. There are
only $O(n^2)$ inequalities to check, as opposed to exponentially many
inequalities that describe $\operatorname{Conv}(\mathit{DS}_n)$. Thus, an
important practical
consequence of this result is the first quadratic time algorithm to
detect the boundary points of $\operatorname{Conv}(\mathit{DS}_n)$ and check
for the existence
of MLE of the degree sequence $\beta$-model.
\item[2.] Statement 3, the converse condition in Theorem~\ref
{thm:beta.mle.exist} is stronger than statement 2. It implies that if
$\bar{d}$ satisfies the system (\ref{mle.ineq}), then the MLE of $\beta$
computed using any permutation of $\bar{d}$ exists.

\item[3.] Theorem~\ref{thm:beta.mle.exist} does not imply that $d$ is in
$\mathit{ri}(\operatorname{Conv}(\mathit{DS}_n))$ if and only if $\bar{d}$ is in
$\mathit{ri}(\operatorname{Conv}(\mathit{DP}_n))$. In
fact, this is not true---no (graphical) degree partitions exists in
the relative interior of $\operatorname{Conv}(\mathit{DP}_n)$; all degree
partitions lie on at
least one of the boundaries defined by equation (\ref{order.ineq}).

\item[4.] When we observe a single graph, the MLE for the degree partition
$\beta$-model never exists. From this point onward, we will use the
term ``MLE of $\beta$'' to mean the MLE of the degree sequence $\beta$-model, even when using a degree partition, since every degree partition
is also a degree sequence.
\item[5.] The degree distribution is the histogram of degree partition, and
furthermore the degree distribution and the degree partition are one to
one transformations of each other, one can be obtained from the other
via a nonlinear transformation. Most recently, \citet
{sadeghi2014statistical} show that the MLE of the degree distribution
model also never exists which complements our results on the degree partition.
\end{longlist}
\end{remarks*}

\subsection{Sampling from $\mathcal{U}(d)$}

Sampling graphs from the set $\mathcal{U}(d)$ is possible only if the set
$\mathcal{G}(d)$ is nondegenerate. Moreover, for there to exist a
nontrivial probability distribution on this set, its cardinality
should be
greater than 1. Proposition~\ref{prop:suffcondSampling} presents
sufficient conditions on $d$ 
under which this is true; the proof appears in Section~II of the supplementary material [\citet{karwaslavsupp}].

%
\begin{proposition}
\label{prop:suffcondSampling}
Let $d$ be a sequence of real numbers. 
Consider the set $\mathcal{G}(d)$, the set of all simple graphs with
degree sequence equal to $d$. If $d$ is a point in $\mathit{DS}_n$, and if $d$
lies in the relative interior of $\operatorname{Conv}(\mathit{DS}_n)$, then
$|\mathcal{G}(d)| >
1$. 
\end{proposition}

\subsection{Inference using noisy statistics}
Theorem~\ref{thm:beta.mle.exist} and Proposition~\ref{prop:suffcondSampling}
give sufficient conditions for estimating parameters of the $\beta
$-model and
for sampling from the space of related graphs. However, in many real world
applications, the exact degree sequence $d$ of a graph is not
available. Instead, we observe a ``noisy'' sequence $z$ either due to sampling
issues or due to privacy constraints. 
Corollary~\ref{cor:sufficientconditions} gives sufficient conditions
for obtaining
valid inference in the $\beta$-model when using such ``noisy'' sequences.
%

\begin{corollary}
\label{cor:sufficientconditions}
Let $z$ be any sequence of integers of length $n$. Consider the
following two inference task: (1) Estimating the MLE of $\beta$-model
using $z$. (2) Sampling from the set $\mathcal{U}(z)$. A sufficient
condition to ensure that the MLE exists and $\mathcal{U}(z)$ is
nonempty is that $z$ is a point in $\mathit{DS}_n$ and lies in the relative
interior of convex hull of $\mathit{DS}_n$.
\end{corollary}

In Section~\ref{main}, we consider the case where $z$ is a noisy degree
sequence obtained by applying a differentially
private mechanism to $d$. We discuss in more detail why directly
using $z$ instead of $d$ typically leads to invalid inference and apply
the results of this section to obtain valid statistical inference by
finding an estimate of $d$ that satisfies conditions of Corollary~\ref
{cor:sufficientconditions}.

\section{Edge differential privacy}
\label{edp}
Differential privacy has become one of the most popular models of
reasoning formally about privacy. In a typical interactive setting,
data users can ask {\textit{queries}} about the data, which can be in the
form of sufficient statistics, and they would receive back
differentially private answers. This type of a privacy mechanism can be
formalized as a family of conditional probability distributions, which
define a distribution on the answers, conditional on the data; for a
statistical overview of differential privacy; see \citet{wassermanzhou}.

In this paper, we focus on edge differential privacy (EDP) where the
goal is to protect the topological
information of the graph. 
EDP is defined to limit disclosure related to presence or absence of
edges in a graph (or relationships between nodes) as the following
definition illustrates.

%
\begin{definition}[(Edge differential privacy)] Let $\varepsilon> 0$. A
randomized mechanism (or a family of conditional probability
distributions) $\mathcal{Q}(\cdot|G)$ is $\varepsilon$-edge differentially
private if
\[
\mathop{\operatorname{sup}}_{G, G' \in\mathcal{G}, \delta(G,G') = 1}
\mathop{\operatorname{sup}}_{S\in
\mathcal{S}} \log
\frac{\mathcal{Q}(S|G)} {\mathcal{Q}(S|G')} \leq \varepsilon,
\]
where $\mathcal{S}$ is the set of all possible outputs (or the range of
$\mathcal{Q}$).
\end{definition}

$\varepsilon$ is the privacy parameter that, as we see below, controls the
amount of noise added to a statistic; small value
of $\varepsilon$ means more privacy protection, but leads to larger noise
in the statistic being released. Roughly, EDP requires that any output
of the mechanism
$\mathcal{Q}$ on two neighboring graphs should be close to each other. Along
the lines of Theorem~2.4 in \citet{wassermanzhou}, one can show that
EDP makes
it nearly impossible to test the presence or absence of an edge in the graph,
thus providing protection.

The most common mechanism to release the output of any statistic $f$ under
differential privacy is the Laplace mechanism [e.g., see \citet{DMNS06}]
which adds
continuous Laplace noise proportional to the \textit{global sensitivity}
of $f$.

%
\begin{definition}[(Global sensitivity)]
Let $f: \mathcal{G} \rightarrow\mathbb{Z}^k$. The global sensitivity
of $f$ is defined as
\[
GS(f) = \max_{\delta(G,G')=1}{\bigl\Vert f(G)-f \bigl(G' \bigr)
\bigr\Vert_1},
\]
where $\Vert\cdot\Vert_1$ is the $L_1$ norm.
\end{definition}

Here, we propose to use a variant of this mechanism to achieve EDP by
adding discrete Laplace
noise, as described in Lemma~\ref{thm:lap.mech}, to the degree sequence
of a
graph (see Algorithm \ref{noisyInference} in Section~\ref{degree.mle}). \citet
{ghosh2009universally} analyzed the discrete Laplace mechanism for one-
dimensional counting queries and showed that it is universally optimal
for a large class of utility metrics. 
The proof of Lemma~\ref{thm:lap.mech} is given in Section~I of the supplementary material [\citet{karwaslavsupp}].

%
\begin{lemma}[(Discrete Laplace mechanism)]
\label{thm:lap.mech}
Let $f: \mathcal{G} \rightarrow\mathbb{Z}^k$. Let $Z_1, \ldots, Z_k$
be independent and identically distributed discrete Laplace random
variables with p.m.f. defined as follows:
\[
P(Z = z) = \frac{1 - \alpha}{1+\alpha}\alpha^{|z|},\qquad z \in\mathbb {Z}, \alpha
\in(0,1).
\]
Then the algorithm which on input $G$ outputs $f(G)+ (Z_1, \ldots,
Z_k)$ is $\varepsilon$-edge differentially private, where $\varepsilon=
-GS(f) \log\alpha$.
\end{lemma}

One nice property of differential privacy is that any function of a
differentially private mechanism is also differentially private.

%
\begin{lemma}[{[\citet{DKMMN06,wassermanzhou}]}]\label{thm:post-processing}
Let $f$ be an output of an $\varepsilon$-differentially private mechanism
and $g$ be any function. Then $g(f(G))$ is also $\varepsilon
$-differentially private.
\end{lemma}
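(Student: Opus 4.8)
The plan is to reduce the claim to the defining inequality of edge differential privacy by pulling events back through $g$. Let $\mathcal{A}$ denote the $\epsilon$-differentially private algorithm that on input $G$ outputs $f(G)$; by hypothesis, for every pair of neighboring graphs $G,G'$ with $d(G,G')=1$ and every measurable set $S$ in the range of $\mathcal{A}$ we have $P(\mathcal{A}(G)\in S)\le e^{\epsilon}P(\mathcal{A}(G')\in S)$. The goal is to establish the same inequality for the composed map $G\mapsto g(\mathcal{A}(G))$.

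First I would handle the case in which $g$ is a deterministic (measurable) function. Fix neighboring graphs $G,G'$ and an arbitrary measurable set $T$ in the codomain of $g$, and put $S=g^{-1}(T)$, which is measurable. As events, $\{g(\mathcal{A}(G))\in T\}=\{\mathcal{A}(G)\in S\}$, so $P(g(\mathcal{A}(G))\in T)=P(\mathcal{A}(G)\in S)\le e^{\epsilon}P(\mathcal{A}(G')\in S)=e^{\epsilon}P(g(\mathcal{A}(G'))\in T)$, which is exactly the required bound.

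Second, to cover randomized post-processing — the version actually used later when $g$ is itself an algorithm injecting fresh noise — I would condition on the internal randomness $r$ of $g$, drawn independently of the data. For each fixed $r$ the map $x\mapsto g(x;r)$ is deterministic, so the previous paragraph gives $P(g(\mathcal{A}(G);r)\in T)\le e^{\epsilon}P(g(\mathcal{A}(G');r)\in T)$; integrating this pointwise inequality against the law of $r$ and using independence (Fubini) yields $P(g(\mathcal{A}(G))\in T)\le e^{\epsilon}P(g(\mathcal{A}(G'))\in T)$, completing the proof.

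The only real obstacle is measurability: one needs $g$, and in the randomized case the joint map $(x,r)\mapsto g(x;r)$, to be measurable so that preimages of measurable sets are measurable and Fubini applies. Under the standard convention that all algorithms and output spaces in this setting are measurable, this is automatic, and no quantitative estimate enters — the privacy parameter $\epsilon$ is preserved verbatim.
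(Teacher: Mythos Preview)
Your argument is the standard, correct proof of the post-processing lemma: pull back events through $g$ and apply the defining inequality, then handle randomized $g$ by conditioning on its auxiliary randomness and averaging. Nothing is missing beyond the routine measurability caveat you already flagged.

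There is, however, nothing to compare against: the paper does not prove this lemma. It is stated with citations to \cite{DKMMN06,NRS07} and used as a black box, so the paper's ``own proof'' is simply a reference to the literature. Your write-up is exactly the proof one finds in those sources.
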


By using Lemma~\ref{thm:post-processing}, we can ensure that any
post-processing done on the noisy degree sequences obtained as an
output of a
differentially private mechanism is also differentially private. In
particular, this means that applying the proposed Algorithm \ref{alg:main}
to the output of a differentially private mechanism also preserves
differential privacy.

\section{Estimating parameters of the \texorpdfstring{$\beta$}{beta}-model using noisy degree
sequences and releasing synthetic graphs}
\label{main}

In this section, we present our main results on obtaining consistent
and asymptotically normal differentially-private\break MLEs for the $\beta$-model.
These results support two main objectives: (1) To achieve statistical
inference that is both optimal and private for the $\beta$-model, and
(2) to release synthetic graphs from the $\beta$-model in a
differentially private manner.

Our approach is based on three steps. In the first step, we release the
degree sequence, which is a
sufficient statistic of the $\beta$-model, using the discrete Laplace
mechanism described in Lemma~\ref{thm:lap.mech}. In the second step, we
model the Laplace mechanism as a measurement error on the sufficient
statistics and ``de-noise'' the noisy sufficient statistic by using
maximum likelihood estimation. In the third step, the de-noised
sufficient statistic is used to estimate the parameters of the $\beta
$-model from which synthetic graphs can be generated. Since each of
these steps uses only the output of a differentially private algorithm,
by Lemma~\ref{thm:post-processing}, the generated synthetic graphs are
also differentially private. Step 2 of modeling the privacy mechanism
as a measurement error process and re-estimating the degree sequence is
critical, as we show in the proofs of Theorems \ref{thm:consistent} and
\ref{thm:clt}, since it allows the third step to produce consistent and
asymptotically normal parameter estimates.
In the next subsections, we look at each of these steps in detail and
describe the associated algorithms and theoretical results.

\subsection{Releasing the degree sequence privately}
\label{degree.mle}
Since the degree sequence $d$ (or degree partition $\bar{d}$) is a
sufficient statistic of the $\beta$-model, the first step releases
these statistics under differential privacy via Algorithm \ref
{alg:lap.degree}. We use the discrete Laplace mechanism (Lemma~\ref
{thm:lap.mech}). 
The global sensitivity of both $d$ and $\bar{d}$ is 2 since adding or
removing an edge can change the degree of at most two nodes, by 1 each.

\begin{algorithm}[t]
\caption{\textit{Input}: A graph $G$ and privacy parameter
$\varepsilon$.
\textit{Output}: Differentially private answer to the degree
sequence of $G$}\label{noisyInference}
\begin{algorithmic}[1]
\label{alg:lap.degree}
\STATE Let $d = \{d_1, \ldots, d_n\}$ be the degree sequence of $G$
\FOR{$i = 1 \to n$}
\STATE Simulate $e_i$ from discrete Laplace with $\alpha= \exp
(-\varepsilon/2)$
\STATE Let $z_i = d_i + e_i$
\ENDFOR
\RETURN$z = \{z_1, \ldots, z_n\}$
\end{algorithmic}
\end{algorithm}

Can we use $z$, a differentially private output of the degree sequence $d$
released by Algorithm \ref{alg:lap.degree}, directly for inference and
generate synthetic graphs? Most work on differential privacy advocates
using $z$ or some post-processed form of $z$ as a ``proxy'' of $d$ for
inference. 
This, however, ignores the noise addition process. Furthermore, a more
serious issue is that $z$ may not satisfy the conditions of
Corollary~\ref{cor:sufficientconditions}.

To understand how $z$ fails the conditions of Corollary~\ref
{cor:sufficientconditions},
consider task (1) from Section~\ref{prelim} where the goal is to
simulate random graphs from the $\mathcal
{U}(d)$ by using the output $z$ instead of $d$. Recall that $\mathcal
{U}(d)$ is nonempty if and only if $d$ is a point in $\mathit{DS}_n$, that is,
$d$ is a graphical sequence. What are the chances that $z$ is
graphical? If $z$ is a sequence of positive integers, the chances are
asymptotically at best $50 \%$; see \citet{arratia2005likely}. In the
present case, $z$ is supported on the set of integers, $\mathbb{Z}_n$
as it is obtained by adding discrete Laplace noise to $d$. Hence, it is
quite unlikely for $z$ to even be in $\operatorname{Conv}(\mathit{DS}_n)$.
Thus, in many cases
$z$ cannot be used directly to perform task (1). 

How about task (2) of estimating $\beta$? Let $\hat{\beta}(d)$ denote
the MLE of $\beta$ obtained using $d$. A basic requirement is the
following: If $\hat{\beta}(d)$ exists, then $\hat{\beta}(z)$, should
also exist. As we mentioned, the existence of MLE is guaranteed only if
$z$ lies in the interior of convex hull of $\mathit{DS}_n$. As discussed
earlier, even if $d$ lies in the interior of convex hull of $\mathit{DS}_n$, $z$
need not. Thus, directly inputting $z$ into a procedure that estimates
the MLE may lead to meaningless results as the MLE may not exist. See
also, Figure~\ref{fig:MLE} in Section~\ref{compareWithHay} for an
empirical demonstration of nonexistence of MLE when using $z$ to
estimate the parameters.

In the next section, we will see that these issues can be resolved by
modeling the privacy mechanism as a measurement error process, and
computing an estimate $\hat{d}$ of $d$, from the noisy sequence $z$,
that satisfies the conditions in Corollary~\ref
{cor:sufficientconditions} with very high probability. Thus, one of the
advantages of using $\hat{d}$ (instead of $z$) for estimation ensures
that the MLE of $\beta$ exists; see Theorem~\ref{thm:consistent} for a
precise statement. In fact, when using $\hat{d}$ for estimation, not
only does the MLE exist, but the MLE is consistent and asymptotically
normally distributed, as proved in Section~\ref{synthetic}.

\subsection{Maximum likelihood estimation of degree sequence}
\label{mle.ds}

We model the privacy mechanism from Algorithm \ref{alg:lap.degree} as a
measurement error on the degree sequence, and use maximum likelihood
estimation to ``de-noise'' the noisy sequence $z$. The noise addition
process here is regarded as special type of measurement error since we
know the exact distribution of the error.
Hence, despite of the fact that we observe a single sample from the
measurement error process (the degree sequence is released only once),
we can recover an estimate of the original sequence. This takes the
privacy mechanism into account in a principled manner and leads to an
estimate of $d$ that can then be used for inference.
More formally, the
output of Algorithm \ref{alg:lap.degree} generates $n$ random variables
$z_i$, such that $z_i = d_i + e_i$ where $e_i \sim\operatorname
{DLap}(\alpha
)$, for
$i = 1$ to $n$ and $d = \{d_1, \ldots, d_n\} \in \mathit{DS}_n$. Note that
$\alpha$
is known and we treat $d$ as the fixed unknown parameter in $\mathit{DS}_n$.
We propose Algorithm \ref{alg:main} that produces the maximum
likelihood estimator $\hat{d}$ of $d$ from the vector of
noisy degrees $z$, and Theorem~\ref{thm:opt} asserts its correctness.
The proof of Theorem~\ref{thm:opt} is deferred until Section~IV of the supplementary material [\citet{karwaslavsupp}].

\begin{algorithm}[t]
\caption{\textit{Input}: A sequence of integers $z$ of length $n$.
\textit{Output}: A graph $G$ on $n$ vertices with degree sequence $\hat{d}$}\label{alg:main}
\begin{algorithmic}[1]
\STATE Let $G$ be the empty graph on $n$ vertices
\STATE Let $S = \{1, \ldots, n\}$
\WHILE{$|S| > 0$}
\STATE$S = S\setminus T$ where $T = \{i: z_i \leq0\}$
\STATE Let $\operatorname{pos} = |S|$ 
\STATE Let $z_{i^*} = \max_{i \in S}z_i$. Let
$i^* = \operatorname{min} \{i \in S:z_i = z_{i^*}\}$ and let $h_{i^*}=\operatorname{min}(z_{i^*},\operatorname{pos}-1)$
\STATE
Let $\mathcal{I}=$ indices of $h_{i^*}$ highest values in
$z(S\setminus\{i^*\})$ where $z(S)$ is the sequence $z$ restricted to
the index set $S$ 
\STATE Add edge $(i^*,k)$ to $G$ for all $k \in\mathcal{I}$
\STATE Let $z_i=z_i-1$ for all $i \in\mathcal{I}$ and $S = S
\setminus\{i^*\}$
\ENDWHILE
\RETURN$G$
\end{algorithmic}
\end{algorithm}

%
\begin{theorem}[(MLE of degree sequence)]
\label{thm:opt}
Let $z = \{z_i\}$ be a sequence of integers of length $n$ obtained
from Algorithm \ref{alg:lap.degree}. The degree sequence of graph $G$
produced by Algorithm \ref{alg:main} is a maximum likelihood estimator
of $d$.
\end{theorem}

Here, we make some remarks on the complexity of this key result. 
Note that the measurement error model and the corresponding maximum
likelihood estimation of the degree sequence is nonstandard---the
number of parameters to be estimated $(d_i, i = 1, \ldots, n)$ is equal
to the number of observations $(z_i, i = 1, \ldots, n)$, and the
parameter space is discrete and very large---the convex hull of the
parameter set is full dimensional for $n \geq4$.
Computing an MLE of $d$ in the measurement error model is equivalent to
finding a $L_1$ ``projection'' of $z$ on $\mathit{DS}_n$, that is, finding a
graphical degree sequence in $\mathit{DS}_n$ closest to $z$ in terms of the
$L_1$ distance:
%
%
\begin{equation}
\label{eq:opt} \hat{d}= \mathop{\operatorname{argmin}}_{h \in \mathit{DS}_n} \Vert h-z\Vert
_1.
\end{equation}
Here, the parameter set $\mathit{DS}_n$ is a collection of points, and it admits
several characterizations. We found the Havel--Hakimi characterization
to be the most useful in producing an efficient procedure for
estimating the MLE, as evident in the proof of Theorem~\ref{thm:opt};
see Section~IV of \citet{karwaslavsupp}.
In fact, a careful analysis of Algorithm \ref{alg:main} shows that it
is a modified Havel--Hakimi procedure applied to the noisy sequence $z$.

The Havel--Hakimi algorithm is a ``certifying'' algorithm in that it
produces a certificate that a degree sequence is graphical, that is, if
the input to the algorithm is a (graphical) degree sequence, it outputs
a graph that realizes it. Remarkably, our proof of Theorem~\ref
{thm:opt} shows that we can convert such a certifying algorithm into an
algorithm (e.g., Algorithm \ref{alg:main}) that performs $L_1$
``projection'' on the set $\mathit{DS}_n$. We conjecture that our proof
techniques apply to more general polytopes such as the polytope of
degree sequences of bipartite graphs or directed graphs. In cases where
a certifying algorithm like the Havel--Hakimi is available for these
polytopes, our proof techniques can be used to devise algorithms for
$L_1$ optimization over the corresponding set of graphical degree sequences.


Even though the maximum likelihood estimation is equivalent to an $L_1$
projection, there are many differences from the traditional
projection. The set $\mathit{DS}_n$ has ``holes'' in it and is not a convex set.
As an example, every
point whose $L_1$ norm is not divisible by 2 is not included in the
set. Due to this, the $L_1$ projection need not be on the boundary of
the convex hull of $\mathit{DS}_n$. Moreover, there can be more than one degree
sequence that attains the optimal $L_1$ distance. Thus, the MLE of $d$
is actually a set and Algorithm \ref{alg:main} finds a point in this
set. Specifically, the following is true.

%
\begin{lemma}
\label{prop}
Let $d^*$ be the output of Algorithm \ref{alg:main}. Let $Z = \{i:
d^*_i = 0 \mbox{ and }\break z_i < 0 \}$ and $P=\{i: d_i < z_i \mbox{ and }
d_i > 0 \}$, and let $|P| \neq0$. Let $k \in Z$. Then there exists a
degree sequence $d$ such that $d_k > 0$ and $\Vert d^*-z\Vert_1 =
\Vert d-z\Vert_1$.
\end{lemma}
%

Lemma~\ref{prop} [proof
of which is in Section~III of \citet{karwaslavsupp}] shows that the
de-noised degree sequence is not unique. Hence, the noise addition
process provides privacy as the original degree cannot be recovered
exactly. Another way to interpret this result is that the Laplace noise
adds more noise than what is needed to ensure differential privacy, and
Algorithm \ref{alg:main} ``removes'' this additional noise, since
applying Algorithm \ref{alg:main} does not degrade privacy, but
crucially improves utility.

Note that Algorithm \ref{alg:main} is efficient and it runs in time
$O(n\log{n} + m)$ where $n$ is the number of nodes and $m$ is the
number of edges. Algorithm \ref{alg:main} returns a graph $G$ whose
degree sequence is $\hat{d}$, thus, by definition, $\hat{d}$ is
graphical. By randomizing $G$, for example, by using the techniques in
\citet{blitzstein2011sequential} or \citet{ogawa2011graver}, the output
from Algorithm \ref{alg:main} can also be used to generate synthetic
graphs from the uniform distribution of graphs
with a fixed degree sequence, $\mathcal{U}(d)$.

In some cases, especially when some of the $z_i$'s are negative, $G$
may be a disconnected graph.
In such cases, whenever the conditions of Lemma~\ref{prop} are
satisfied, we use it to modify the optimal degree sequence so that it
corresponds to a connected graph. (Note that being the degree sequence
of a connected graph does not ensure that the MLE exists, but the
opposite is true---the MLE of $\beta$ does not exist if the degree
sequence is realized by a disconnected graph.) The proof of Lemma~\ref
{prop} in Section~III of the supplementary material gives
the steps
for the construction of the modified sequence. It is easy to see that
verification of the conditions of Lemma~\ref{prop} and the construction
of the modified sequence takes $O(n \log n)$ time. Hence,
asymptotically, this step does not increase the computational
complexity of Algorithm \ref{alg:main}.
We now proceed to the task of estimating $\beta$ using $\hat{d}$.

\subsection{Asymptotic properties of the private estimate of \texorpdfstring{$\beta$}{beta}}
\label{synthetic}
Let $\hat{d}$ denote the $\varepsilon$-differentially private
estimate of
$d$ obtained by using Algorithms \ref{alg:lap.degree} and \ref
{alg:main}. A private MLE of $\beta$ can be obtained by plugging $\hat
{d}$ in the maximum likelihood equations (\ref{mle.eq}) and solving for
$\beta$; let us denote this estimate by $\hat{\beta}(\hat{d})$. Since
$\hat{d}$ is $\varepsilon$-differentially private, by Lemma~\ref
{thm:post-processing}, $\hat{\beta}(\hat{d})$ is also $\varepsilon
$-differentially private. 
But how does $\hat{\beta}(\hat{d})$ compares to the estimate $\hat
{\beta
}(d)$ obtained from the original degree sequence $d$? We demonstrate
the utility of the proposed private estimate of $\beta$ by proving two
key results in Theorems \ref{thm:consistent} and \ref{thm:clt}, that
is, $\hat{\beta}(\hat{d})$ is consistent and asymptotically normal.

\textit{Consistency}---Consistency of the maximum likelihood estimator
of $\beta$ in the
nonprivate case was shown by \citet{chatterjee2011random}. Here, we
show that our proposed private estimator of $\beta$ is also
consistent, that is one can consistently estimate the parameters of
the $\beta$-model using $\hat{d}$ (as opposed to using $d$).

Theorem~\ref{thm:consistent} shows that using $\hat{d}$ to estimate the
MLE guarantees both the existence of MLE and the uniform consistency
(in contrast to naively using the differentially private output $z$
that does not even guarantee that the MLE exists as discussed in
Sections \ref{degree.mle} and \ref{mle.ds}).

%
\begin{theorem}[(Asymptotic consistency)]
\label{thm:consistent}
Let $G$ be a random graph from the $\beta$-model and let $d = (d_1,
\ldots, d_n)$ be its degree sequence. Let $L = \max_i|\beta_i|$. Let
$\hat{d} = (\hat{d}_{1}, \ldots, \hat{d}_{n})$ be the differentially
private maximum likelihood estimate of $d$ obtained from output of the
Algorithm \ref{alg:main}, and let
\[
\hat{d}_{i} = \sum_{j \neq i}\frac{e^{\hat{\beta}_i + \hat{\beta}_j
}}{1+ e^{\hat{\beta}_i + \hat{\beta}_j }}
\]
be the maximum likelihood equations. Let $C(L)$ be a constant that
depends only on $L$. Then for $\varepsilon_n = \Omega(\frac{1}{\sqrt
{\log n}} ) $, there exists a unique solution $\hat{\beta}(\hat
{d})$ to the maximum likelihood equation such that
\[
\mathbb{P}\biggl( \max_i\bigl|\hat{\beta_i}(
\hat{d}) - \beta_i \bigr| \leq C(L)\sqrt{\frac{\log {n}}{n}} \biggr) \geq1
- C(L)n^{-2}.
\]
\end{theorem}

The proof of Theorem~\ref{thm:consistent} is given in Section~V of the supplementary material [\citet
{karwaslavsupp}]. This
key result implies that asymptotically there is no cost to privacy in
this setting in relation to obtaining valid inference. In particular,
the result shows that for large $n$ and $\varepsilon= \Omega(\frac
{1}{\sqrt{\log n}} )$, the MLE of $\beta$ obtained from $\hat{d}$
exists and is unique and can be estimated with uniform accuracy in all
coordinates. 
In practice, the dependence of $\varepsilon$ on $n$ can be improved by
numerically computing and checking if the tail bound in Lemma~C in the supplementary material [\citet
{karwaslavsupp}], needed for
the proof of Theorem~\ref{thm:consistent}, is satisfied. Thus, this
theorem gives practical guidelines on whether for a given $\varepsilon$
and $n$ combination, the consistency result holds.

Finally, we want to point that if one is allowed to release $d$ many
times using Algorithm \ref{alg:lap.degree}, one can average out the
noise due to the Laplace mechanism and get consistency trivially by
using the law of large numbers. This is not allowed, as the privacy
loss of each release is additive in terms of $\varepsilon$ and would
defeat the purpose of privacy. Hence, to provide meaningful privacy,
the sample size of the private degree sequence is 1, that is, $d$ is
released only once using the Laplace mechanism. Theorem~\ref
{thm:consistent} shows that consistency can still be obtained using a
single private sample of the degree sequence.

\emph{Asymptotic normality}---A central limit theorem for $\hat{\beta
}(d)$ was derived in \citet{Yan01062013}; see also \citet
{Yangeneral}. In
Theorem~\ref{thm:clt}, we derive a similar central limit result for
$\hat{\beta}(\hat{d})$. This distribution can be used to derive
differentially private approximate confidence intervals and perform
hypothesis tests on the parameter estimates. The proof is given in
Section~VI of the supplementary material [\citet{karwaslavsupp}].

Let the covariance matrix of $d = \{d_1, \ldots, d_n\}$ be $V_n = \{
v_{ij}\}$ where
\[
v_{ij} = \frac{\exp{\beta_i + \beta_j}}{(1+\exp{\beta_i + \beta_j})^2}
\]
and
\[
v_{ii} = \sum_{j \neq i, j=1}^n{v_{ij}}.
\]

%
\begin{theorem}[(Asymptotic normality)]
\label{thm:clt}
Let $L = \max_i|\beta_i|$ be a fixed constant and $\varepsilon=
\Omega
(\frac{1}{\sqrt{\log n}} )$. Let $\hat{d}$ be a
differentially private maximum likelihood estimate of $d$ obtained from
Algorithm \ref{alg:main}. Let $\hat{\beta}(\hat{d})$ be the MLE of the
$\beta$-model obtained using $\hat{d}$. For any fixed $r \geq1$, the
random vector
\[
\bigl(\sqrt{v_{11}} \bigl(\hat{\beta}(\hat{d})_1 -
\beta_1 \bigr), \ldots, \sqrt{v_{rr}} \bigl(\hat{\beta}(
\hat{d})_r - \beta_r \bigr) \bigr)
\]
converges to a standard multivariate normal distribution.
\end{theorem}

\section{Releasing graphical degree partitions}
\label{compareWithHay}
In this section, we extend Algorithm \ref{alg:main} to release degree
partitions and compare it with previous work due to \citet{hay2009}.

One can release the degree partition $\bar{d}$ instead of the degree sequence
$d$ in cases where the ordering of the nodes is not important, or one is
interested in the degree distribution (histogram of degrees). The
latter was
the motivation of \citet{hay2009} who instead of releasing the degree
distribution, release the degree partition $\bar{d}$ which has the same
global sensitivity as $d$; thus, Algorithm \ref{alg:lap.degree} can be
used to release a noisy degree partition. Let $z$ be the noisy answer,
that is, $z = \bar{d} +e$. \citet{hay2009} project $z$ onto the set of
integer partitions (nonincreasing integer sequences), which is a
special case of isotonic regression (henceforth referred to as
``Isotone''). They show that this reduces the $L_2$ error. Note,
however, that the output need not be a graphical degree partition, that
is, there may not exist any simple graph corresponding to the output.

To solve this issue, we propose using the following two step algorithm
(referred to as ``Isotone--Havel--Hakimi'' or ``Isotone--HH'') to release
a graphical degree partition.
\begin{longlist}[1.]
\item[1.] Let $\bar{z}$ be the closest integer partition to $z$ in terms of
$L_1$ distance.
\item[2.] Let $\hat{\bar{d}}$ be the output of Algorithm \ref{alg:main} on
input $\bar{z}$.
\end{longlist}

Unlike the case of degree sequence, this procedure does not estimate an
MLE of $\bar{d}$. However, Corollary~\ref{thm:opt.part} shows that the
estimate is still optimal in sense of the $L_1$ error, and more
importantly, it is a point in $\mathit{DP}_n$ that is closest to $\bar{z}$. The
proof of Corollary~\ref{thm:opt.part} appears in Section~VII of the supplementary material [\citet
{karwaslavsupp}]. 

%
\begin{corollary}
\label{thm:opt.part}
Let $\bar{z} = \{\bar{z}_i\}$ be a sequence of nonincreasing integers
of length~$n$. The degree partition of graph $G$ output by Algorithm
\ref{alg:main} on input $\bar{z}$ is a solution to the optimization
problem $\operatorname{argmin}_{h \in \mathit{DP}_n} \Vert h-\bar{z}\Vert_1$.
\end{corollary}

Release of synthetic graphs here follows as discussed in
Section~\ref{mle.ds}.

\section{Simulation results}
\label{experiments}
In this section, we evaluate the finite sample properties of the
differentially private estimator of $\beta$. 
We perform two sets of
experiments. In the first set, we compare the utility of
\citet{hay2009} with our algorithm when releasing degree partitions
$\bar
{d}$. In the second set of experiments, we estimate $\beta$ using the
private estimate $\hat{d}$ of Algorithm \ref{alg:main} and compare it
with the estimates obtained by using the nonprivate degree $d$. We use
three networks, two real and one simulated, described below.
\begin{longlist}[1.]
\item[1.] Sampson Monastery Data [\citet{sampson1968novitiate}]---This is a
real network of relationship between monks in a monastery. It consists
of social relations among a set of 18 monks. The original dataset was
asymmetric and collected for three time periods. In this study, we
symmetrize the network by using the upper triangular adjacency matrix
of time period 1. There are $18$ nodes and $35$ edges in this network.

\item[2.] Karate Dataset [\citet{zachary1977}]---This is a real network of
friendships between 34 members of a karate club at a US university in
the 1970. It has $78$ edges and $34$ nodes. 

\item[3.] Likoma Island [\citet{helleringer2007sexual}]---This is a simulated
network of number of sexual partners of people living in the Likoma
island. \citet{helleringer2009likoma} describes the study and data
collection procedures based on a survey. Using the estimated degree
sequence [obtained from the survey data and given in \citet
{helleringer2009likoma}], we simulated a random network with the fixed
degree sequence. The simulated network consists of $250$ nodes and
$248$ edges. 
\end{longlist}

\textit{Releasing $\bar{d}$ to estimate $\beta$}: The goal of these
experiments is to compare \textit{isotone} and \textit{isotone--hh} algorithms
for releasing differentially private verstions of degree partitions
$\bar{d}$. We evaluate these algorithms on two metrics. The first
metric is the probability of the event $R$ where $R = \{ \hat{\beta}(y)
\mbox{ exists}\}$, where $y$ is output of the mechanism. The second
metric is the median $L_1$ error between $\bar{d}$ and $y$ for fixed
$\bar{d}$, that is, $\operatorname{err}(\bar{d}) = \operatorname{median}[|\bar{d}-y|]$.
For each network and a fixed value of privacy parameter $\varepsilon$,
$\bar{d}$
is released $B=500$ times using \textit{isotone} and our
\textit{isotone--hh} procedure. Note that even though each release of
$\bar{d}$ is
$\varepsilon$-edge differentially private, the entire simulation study
is $500
\varepsilon$-edge differentially private. In practice, $\bar{d}$ will be
released only once. However, in the experiments, we are interested in
evaluating the frequentist properties of the procedure, and hence we release
the degree partition multiple times. Using these released degree partitions,
we compute $P(R)$ and $\operatorname{err}(\bar{d})$. This procedure is repeated for
different levels of $\varepsilon$ varying from $0$ to $4$, for all three
datasets. Note that a larger $\varepsilon$ means lower noise and less
privacy. Figure~\ref{fig:MLE} shows a plot of $P(E)$ and $\operatorname{err}(\bar{d})$
normalized by the number of nodes for varying levels of $\varepsilon$.

%
\begin{figure}

\includegraphics{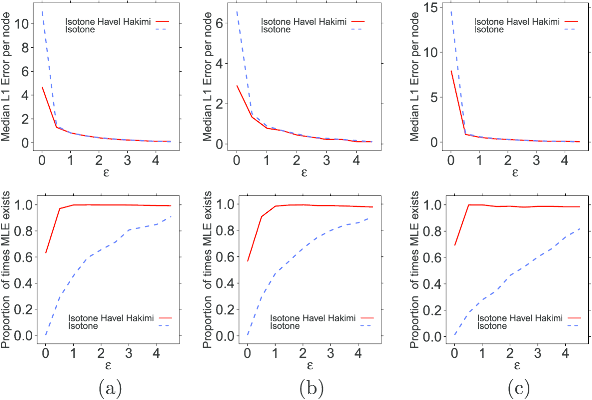}

\caption{Comparison of ``Isotone'' and ``Isotone--HH'' to release $\bar
{d}$. The plots show the $L_1$ error and the probability that the MLE
exists for varying levels of $\varepsilon$ for three different
networks. \textup{(a)}~Karate; \textup{(b)}~Sampson; \textup{(c)} Likoma.}
\label{fig:MLE}
\end{figure}

%

As expected, for both algorithms, as $\varepsilon$ increases, $P(R)$ increases
and the median $L_1$ error decreases. In many cases, the MLE of the
output of
\textit{isotone} fails to exist as it lies outside the convex hull of
$\mathit{DP}_n$. $P(R)$ is significantly higher for \textit{isotone--hh} for all
three datasets. For instance for the Karate dataset, $P(R)$ quickly
approaches $1$ as $\varepsilon$ increases, when using the \textit
{isotone--hh} algorithm, where as it never reaches $1$ when using the
$\mathit{isotone}$ algorithm. The other two datasets exhibit similar behavior.
We can also see that for the Likoma dataset, the gap between the two
algorithms in terms of $P(R)$ is much higher when compared to the other
two datasets. More specifically, when using the \textit{isotone}
algorithm, $P(R)$ increases slowly with $\varepsilon$ for the Likoma
dataset when compared to the other two datasets. On the other hand,
when using the \textit{isotone--hh} algorithm, $P(R)$ increases quickly
with $\varepsilon$ for all three datasets. A possible explanation for the
behavior of the \textit{isotone} algorithm is that the Likoma data are
sparse. Recall that $P(R)$ is $0$ if the noisy sequence lies outside
$\operatorname{Conv}(\mathit{DP}_n)$ (see Theorem~\ref{thm:beta.mle.exist}).
Due to the
sparsity of Likoma data, the degree partition is close to the boundary
of $\operatorname{Conv}(\mathit{DP}_n)$. In this case, adding Laplace noise
puts the degree
partition outside $\operatorname{Conv}(\mathit{DP}_n)$, and the post-processing
step of \textit
{isotone} is not sufficient to get a sequence inside $\operatorname
{Conv}(\mathit{DP}_n)$, and
hence $P(R) = 0$ for such instances.

When considering the median $L_1$ error, the \textit{isotone--hh}
algorithm not only provides an increased probability that the MLE
exists, but also provides more accurate estimates of $\bar{d}$,
especially for smaller levels of $\varepsilon$. For instance, for
$\varepsilon= 0.1$, for the Karate dataset, the median $L_1$ error per
node in estimating the degree is $4$ for the \textit{isotone--hh}
whereas it is greater than $10$ for the \textit{isotone} algorithm.
Thus, we can see that \textit{isotone--hh} offers more ``utility'' in
terms of both estimating the MLE, and also in terms of the $L_1$ error.

%
\begin{figure}

\includegraphics{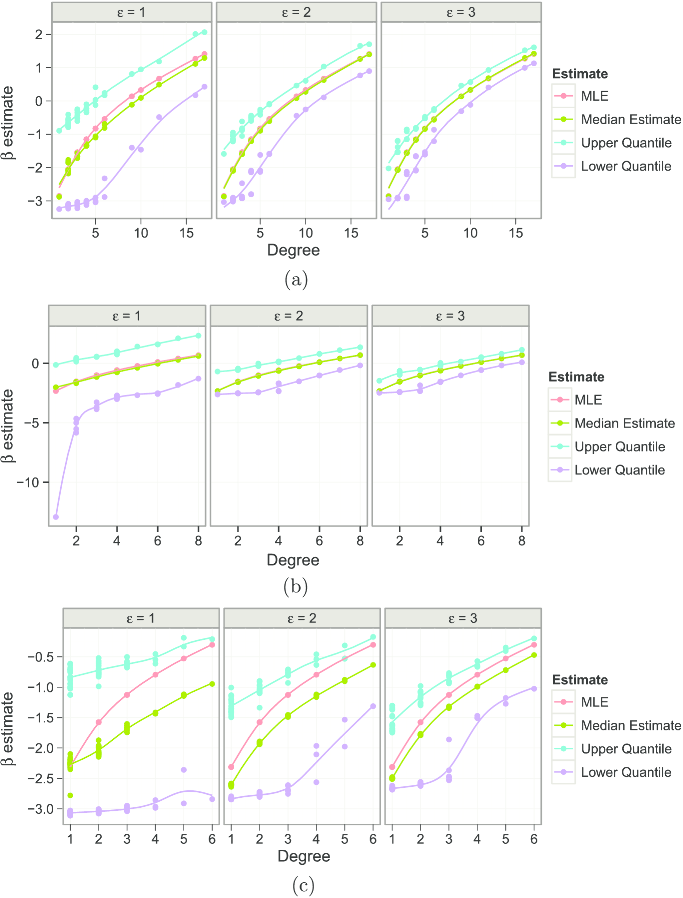}

\caption{Comparison of differentially private estimate of $\beta$ with
the MLE for three different datasets. The plots show the median and the
upper (95{th}) and the lower (2.5{th}) quantiles. \textup{(a)} Karate
data; \textup{(b)}~Sampson data; \textup{(c)} Likoma island data.}
\label{fig:betadegree}
\end{figure}

\textit{Estimation of $\beta$ using $d$}: In the second set of experiments,
we evaluate how close $\hat{\beta}(\hat{d})$ is to
$\hat{\beta}(d)$. Here, $\hat{\beta}(d)$ is the estimate of $\beta
$ obtained
by using the original degree sequence and $\hat{\beta}(\hat{d})$
is the estimate of $\beta$ obtained by using the private degree sequence
$\hat{d}$ obtained from the output of Algorithm
\ref{alg:main}. Figure~\ref{fig:betadegree} shows a plot of the
estimates of
$\beta$ on the $y$ axis and degree on the $x$ axis. The red line indicates
$\hat{\beta}(d)$ and the green line indicates the median estimate of
$\hat{\beta}(\hat{d})$. Also plotted are the upper (95th) and
the lower (2.5{th}) quantiles of the estimates. The results show
that the
median estimate of $\beta(\hat{d})$ is very close to $\beta(d)$ and lies
within the $95$ percent quantiles of the estimates. Moreover, as
expected, as
$\varepsilon$ increases, the variance in the estimates get smaller.
The median
private estimates of $\beta$ for the Karate and the Sampson dataset
are very
close to the nonprivate MLE. However, the private estimates of $\beta$ for
the Likoma dataset have higher variance and are farther from MLE of
$\beta(d)$ due to the fact that the Likoma graph is sparse and the
$\beta$-model does not fit the original data very well. This suggests that
the $\beta$-model may not be a \emph{robust} model for sparse networks
in the following sense. If the network is very sparse, the degree
sequence of the original data may lie
close to the boundary of $\operatorname{Conv}(\mathit{DS}_n)$. Due to this,
adding or removing
a small number of edges may cause the degree sequence to end up being
\underline{on} the boundary.

\section{Conclusions and future work}
\label{conclusion}
In this paper, we characterize the conditions for the existence of MLE
of the degree partition and the degree sequence $\beta$-model that lead
to an efficient quadratic time algorithm. Motivated by the privacy
problem of sharing confidential data under rigorous privacy guarantees,
that often falls short of satisfying data utility, we present
techniques to perform valid and differentially private statistical
inference with the $\beta$-model of random graphs and to release
differentially private synthetic graphs from the $\beta$-model. We
present an efficient maximum likelihood algorithm to re-estimate the
original degree sequence from a noisy sequence released by a
differentially private mechanism. We showed that this estimated degree
sequence can be used to obtain a consistent and asymptotically normally
distributed estimates of the parameters of the $\beta$-model, and thus
incur no cost due to privacy from utility perspective. Using the
example of the $\beta$-model, we showed that 
the noisy sufficient statistics $z$ must be post-processed (or
projected) in
an appropriate manner by taking the noise mechanism into account in
order to obtain optimal inference. In particular, by treating the
privacy mechanism as a nonlinear measurement error model, one can
estimate the sufficient statistics from their noisy counterparts and
obtain optimal inference. This also ensures that existing methods for
maximum likelihood estimation do not break.

We would like to note again, in light of Corollary~\ref
{cor:sufficientconditions}, that in general, using noisy sufficient
statistics $z$ of any model instead of the true sufficient statistics
may lead to inconsistent estimates, in particular, nonexistence of
MLE. A key issue is that the noisy statistic $z$ usually lies in
$\mathbb{R}^n$ whereas the validity of many inference procedures (such
as existence of MLE and consistency) is guaranteed only when $z$ lies
in some set $S \subset\mathbb{R}^n$, typically the convex hull of
sufficient statistics of the associated model, for example, $S =
\operatorname{Conv}(\mathit{DS}_n)$.
In some cases, $z$ is post-processed and projected onto a set $S'$; the
choice of $S'$ is motivated with a goal of imposing some reasonable
constraint on the noisy statistic, and to reduce the $L_2$ error
between the noisy and the original statistics. But usually, $S \neq
S'$. We showed with the degree partition example that such approach
does not even guarantee the existence of MLE, let alone consistency.
Thus more carefully designed and provable methods are needed to
guarantee utility, keeping in mind the end goals of statistical
inference (e.g., estimation of parameters, and not just statistics).

We demonstrated that significant gains in utility can be made by using
a two step technique of (a) ``de-noising'' the noisy statistic using
maximum likelihood estimation on the measurement error model and (b)
estimating the MLE of the parameter of interest using the de-noised
version of the statistic. Note that the first step is equivalent to
``projecting'' the noisy statistic onto the lattice points of the
corresponding marginal polytope.
While this two step procedure guarantees that the MLE of the parameter
exists, a priori, these is no reason to believe that the estimates are
also consistent and asymptotically normal. But we prove, remarkably, in
the case of $\beta$-model, that they are. We believe that this
principled two step approach could be applicable in other settings, and
would lead to not only existence of MLE but also consistency and
asymptotic normality. An interesting class of models to extend these
techniques to are the general class of discrete exponential families
and in particular, various families of $\beta$-models such as the Rasch
models of bipartite graphs [e.g., \citet{rinaldo2011maximum}], models
based on weighted degree sequences such as those studied in \citet
{hillar2013maximum} and degree sequences of directed graphs, and
finally the class of log-linear models where \citet{FienbergRinYang}
have already demonstrated some of the above mentioned issues with
estimations done in a privacy-preserving manner.

There are several challenges in extending our principles to the above
mentioned class of models. One of the key challenges is, for each of
these families, finding a description of the marginal polytope $S$ that
would allow the ``de-noising'' step; the marginal polytope is a complex
combinatorial object associated with the existence of MLE and is a
focus of many studies; see, for example, \citet{rinaldo2009geometry},
but its characterization is often nontrivial. One avenue for further
work is to use the technique of asymmetrization of a polytope, as done
in this paper, to derive efficient conditions for the existence of MLE
for generalized $\beta$-models. Once such a description is found, the
next challenge is to devise an efficient algorithm for ``projecting''
the noisy statistic onto the set of lattice points of the marginal
polytope. The projection can be informed by the measurement error
model. In our case, the significant contribution is achieved, by
combing these two steps---finding the ``right'' description of $S$ and
a projection algorithm---into one step. We do this by using an
efficient algorithmic description of the lattice points of the marginal
polytopes (e.g., the Havel--Hakimi algorithm [\citet{havel,hakimi}]
provides such a description for degree sequences) and somewhat
surprisingly, converting such a description into an efficient
projection algorithm. Such efficient descriptions do no exist for the
more general class of discrete exponential families [e.g., see \citet
{hillar2013maximum} and \citet{engstrom2010polytopes}] and is an
interesting direction of future work that goes beyond private
estimation and warrants an independent inquiry.

In cases where de-noising is not possible, for example with more
general graph statistics, how can we capture the noise infusion due to
privacy or some other mechanism? An
alternative is to develop new statistical procedures that integrate
the noise addition process into the likelihood by using missing data
techniques, for example, see \citet{karwa2014differentially} for
differentially private estimation of exponential random graph models.
But such solutions may be computationally expensive and currently lack
theoretical properties.

\section{Proofs}
\label{proofs}

\subsection{Proof of Theorem \texorpdfstring{\protect\ref{thm:beta.mle.exist}}{1}}
\label{proof:beta.mle.exist}
The key technique to prove this result is to use the polytope of degree
partitions to characterize the boundary of the polytope of degree
sequence, $\operatorname{Conv}(\mathit{DS}_n)$. We will need the following
result from \citet
{mahadev1995threshold} that characterizes the boundary of
$\operatorname{Conv}(\mathit{DS}_n)$.

%
\begin{lemma}[{[Lemma~3.3.13 in Mahadev and Peled (\citeyear
{mahadev1995threshold})]}]
\label{lem:bdofDSn}
Let $d$ be a degree sequence of a graph $G$ that lies on the boundary
of $\operatorname{Conv}(\mathit{DS}_n)$. Then there exist nonempty and disjoint
subsets $S$
and $T$ of $\{1, \ldots, n\}$ such that:
\begin{longlist}[1.]
\item[1.]$S$ is clique of $G$;
\item[2.]$T$ is a stable set of $G$;
\item[3.] Every vertex in $S$ is adjacent to every vertex in $(S\cup T)^c$
in $G$;
\item[4.] No vertex of $T$ is adjacent to any vertex of $(S\cup T)^c$ in $G$.
\end{longlist}
\end{lemma}

\emph{Part} (i)---\emph{MLE of the degree partition $\beta$-model}: By
Theorem~9.13 in \citet{nielsen1978information}, the MLE $\hat{\beta
}(\bar
{d})$ exists iff $\bar{d} \in \mathit{ri}(\operatorname{Conv}(\mathit{DP}_n))$. Here,
$\mathit{ri}(\operatorname{Conv}(A))$
denotes the relative interior of the convex hull of $A$. To prove the
first part of the theorem, note that the following system of
inequalities along with the constraint $d_1 \leq d_2 \leq\cdots\leq d_n$
describe the faces the convex hull of degree partitions [see
Theorem~1.3 in \citet{bhattacharya2006polytope}]:
\begin{longlist}[1.]
\item[1.]
\[
\bar{d}_i > 0 \quad\mbox{and}\quad \bar{d}_i < n-1\qquad \forall i\quad
\mbox{and},
\]
\item[2.]
\[
\sum_{i=1}^k{\bar{d}_i} -
\sum_{i=n-l+1}^{n}{\bar{d}_i} <
k(n-1-l) \qquad\mbox{for } 1 \leq k+1 \leq n.
\]
\end{longlist}

Thus, the ordering constraints also define $n-1$ faces of the polytope
given by $d_{i+1}- d_{i} \leq0$. For a degree partition to be in the
interior of $\operatorname{Conv}(\mathit{DP}_n)$, it must hold that $\bar{d}_1
> \bar{d}_2 >
\cdots> \bar{d}_n$. This is possible only if each $\bar{d}_i = n-i$.
However, such a sequence is not realizable (and hence not a degree
sequence) as $\bar{d}_n = 0$ and $\bar{d}_1 = n-1$. Hence, there is no
degree partition that lies in the interior of $\operatorname
{Conv}(\mathit{DP}_n)$, and the MLE
for the degree partition $\beta$-model never exists when we observe
only one graph.

\emph{Part} (ii): We have to show that if the MLE $\hat{\beta}(d)$
exists, then $\bar{d}$ satisfies the system (\ref{mle.ineq}). Recall that
the MLE $\hat{\beta}(d)$ exists iff $d \in \mathit{ri}(\operatorname
{Conv}(\mathit{DS}_n))$. Also, note
that $d \in \mathit{ri}(\operatorname{Conv}(\mathit{DS}_n))$ iff
%
%
\begin{equation}\qquad
\label{eq:desDS} \sum_{i \in S }{d_i} - \sum
_{i \in T}{d_i} < |S|\bigl(n-1-|T|\bigr)\qquad \forall S, T
\subset[n], S \cup T \neq\varnothing, S \cap T = \varnothing.
\end{equation}
For example, see Theorem~3.3.17 in \citet{mahadev1995threshold}.


We show that the system of inequalities in (\ref{eq:desDS}) are
permutation invariant, that is, if $d$ satisfies (\ref{eq:desDS}), then
$\pi d$ also satisfies (\ref{eq:desDS}), where $\pi$ is any permutation
on $[n] = \{1, \ldots, n\}$. To see this, let $(\mathcal{S},\mathcal
{T}) = \{(S,T)\}$ be the set of all possible sets $S$ and $T$ such that
$S, T \subset[n] = \{1, \ldots,n\}$, $S \cup T \neq\varnothing$, $S
\cap T = \varnothing$. First, note that if $(S,T) \in(\mathcal
{S},\mathcal{T})$, then $(T,S) \in(\mathcal{S},\mathcal{T})$. Also,
note that $(\mathcal{S},\mathcal{T})$ is closed under permutations,
that is, if $(S,T) \in(\mathcal{S},\mathcal{T})$, and if $\pi$ is any
permutation on $[n]$, then $ (\pi S, \pi T) \in(\mathcal{S},\mathcal{T})$.


Now assume that $d \in \mathit{ri}(\operatorname{Conv}(\mathit{DS}_n))$, we need to show
that $\bar{d}$
satisfies the system of inequalities (\ref{mle.ineq}). Note that $d$
satisfies (\ref{eq:desDS}). By the fact that these inequalities are
permutation invariant, any permutation of $d$ also satisfies (\ref
{eq:desDS}). Hence, as $\bar{d} = \pi d$ for some permutation $\pi$,
(\ref
{eq:desDS}) is true for $\bar{d}$.

Taking $S = \{1, \ldots, k\}$ and $T = \{n-l+1, \ldots, n\}$ gives the
second set of inequalities in (\ref{mle.ineq}). Taking $S= \{i\}, T=
\varnothing$ gives $\bar{d}_i < n-1$ and taking $S = \varnothing, T
= \{i\}
$ gives $\bar{d}_i > 0$. 

\emph{Part} (iii): Assume that $\bar{d}$ satisfies the system (\ref
{mle.ineq}). We will show that $\bar{d}$ does not lie on the boundary of
$\operatorname{Conv}(\mathit{DS}_n)$. This will imply that $\bar{d} \in
\mathit{ri}(\operatorname{Conv}(\mathit{DS}_n))$, which
implies that $\bar{d}$ satisfies the inequalities (\ref{eq:desDS}). By
the permutation invariance of the system (\ref{eq:desDS}), $\pi\bar{d} =
d$ also satisfies (\ref{eq:desDS}), from which the result follows.

All that is remaining to be shown is that $\bar{d}$ does not lie on the
boundary of $\operatorname{Conv}(\mathit{DS}_n)$. The boundary of
$\operatorname{Conv}(\mathit{DS}_n)$ is characterized
by Lemma~\ref{lem:bdofDSn}. Let $G$ be a graph that realizes $\bar{d}$,
hence $G$ is such that there exist disjoint subsets of $\{1, \ldots,
n\}
$ $S$ and $T$ satisfying conditions of Lemma~\ref{lem:bdofDSn}.

Let $i \in S$, then $\bar{d}_i \geq(|S|-1) + |(S\cup T)^c| = n -
|T|-1$ (by conditions 1 and 3 of Lemma~\ref{lem:bdofDSn}). Let $i \in
T$ then $\bar{d}_i \leq|S|$. Finally if $i \in(S \cup T)^c$, then
$\bar{d}_i \geq|S|$ (by condition 3 in Lemma~\ref{lem:bdofDSn}) and
$\bar{d}_i \leq|S| + |(S\cup T)^c|-1 = n - |T| - 1$ (by condition 4 in
Lemma~\ref{lem:bdofDSn}). Putting these together, we get the following:
%
%
\begin{eqnarray}
\label{eq:char} %
0 &\leq&\bar{d}_i \leq|S|,\qquad i \in T,
\nonumber
\\
|S| & \leq&\bar{d}_i \leq n - |T| -1,\qquad i \in(S\cup T)^c,
\\
n - |T|-1 &\leq&\bar{d}_i \leq n-1,\qquad i \in S.
\nonumber
\end{eqnarray}

Now note that $\bar{d}_1 \leq\bar{d}_2 \leq\cdots\leq\bar{d}_n$. Hence,
the only possible choice for $S$ and $T$ are $S = \{1, \ldots, k \}$
and $T = \{n-l+1, \ldots, n\}$ where $k = |S|$, $l = |T|$, $1 \leq k +l
\leq n$. No other combinations of $S$ and $T$ exist, due to the
characterization of $\bar{d}$ given in equation (\ref{eq:char}).
Next, since $\bar{d}$ is on the boundary of $\operatorname
{Conv}(\mathit{DS}_n)$, it holds that
$\sum_{i \in S }{d_i} - \sum_{i \in T}{d_i} = |S|(n-1-|T|)$ for all
such $S$ and $T$ described above. However, we are given that this is
not true. Hence $\bar{d}$ must lie in the interior of $\operatorname
{Conv}(\mathit{DS}_n)$.

\section*{Acknowledgments}
Karwa was a graduate student at the Department of Statistics,
Pennsylvania State University when the paper was initially submitted.
The authors would also like to thank anonymous referees, editor,
Alessandro Rinaldo and Johannes Rau for helpful feedback.

\begin{supplement}[id=suppA]
\stitle{Supplement to ``Inference using noisy degrees: Differentially
Private $\beta$-model and synthetic graphs''}
\slink[doi]{10.1214/15-AOS1358SUPP} 
\sdatatype{.pdf}
\sfilename{aos1358\_supp.pdf}
\sdescription{This supplementary material contains the proof of the key Theorems \ref{thm:opt}, \ref{thm:consistent} and \ref{thm:clt} from the paper.}
\end{supplement}

%


%


\printaddresses
\end{document}